\newtheorem{theorem}{Theorem}
\newtheorem{lemma}{Lemma}
\begin{document}
%\fontsize{9}{11}\selectfont
%
% paper title
% can use linebreaks \\ within to get better formatting as desired
\title{\LARGE A Low-Complexity Detection Algorithm For Uplink Massive MIMO Systems  Based on Alternating Minimization\vspace{-.3in}} 
\vspace{-2in}%}
% author names and affiliations
%\lhead{Submitted to IEEE Wireless Communication Letters}
% use a multiple column layout for up to three different
%2-Stage LASSO ADMM For Large Scale MIMO Detection}
% affiliations
\author{\IEEEauthorblockN{}
\IEEEauthorblockA{\vspace{-1in}Anis Elgabli, Ali Elghariani, Vaneet Aggarwal, and Mark R. Bell%\\%\\
%Purdue University, West Lafayette IN 47907, *University of Tripoli, Libya\\
%Email: aelgabli, vaneet, mrb@purdue.edu and a.elghariani@uot.edu.ly
}}
\maketitle
\vspace{-.1in}
\begin{abstract}
\if0

This paper explores the benefit of using some of the machine learning techniques and Big data optimization tools in approximating maximum likelihood (ML) detection of Large Scale MIMO systems. First, large scale MIMO detection problem is formulated as a LASSO (Least Absolute Shrinkage and Selection Operator) optimization problem. Then, Alternating Direction Method of Multipliers (ADMM) is considered in solving this problem. The choice of ADMM is motivated by its ability of solving convex optimization problems by breaking them into smaller sub-problems, each of which are then easier to handle. Further improvement is obtained using two stages of LASSO with interference cancellation from the first stage. The proposed algorithm is investigated at various modulation techniques with different number of antennas. It is also compared with widely used algorithms in this field. Simulation results demonstrate the efficacy of the proposed algorithm for both uncoded and coded cases.   
\fi
In this paper, we propose an algorithm based on the Alternating Minimization technique to solve the uplink massive MIMO detection problem. \textcolor{black}{The proposed algorithm is specifically designed to avoid any matrix inversion and any computations of the Gram matrix at the receiver. The algorithm provides a lower complexity compared to the conventional MMSE detection technique, especially when the total number of user equipment (UE) antennas (across all users) } is close to the number of base station (BS) antennas. The idea is that the algorithm re-formulates the maximum likelihood (ML) detection problem as a sum of convex functions based on decomposing the received vector into multiple vectors. Each vector represents the contribution of one of the transmitted symbols in the received vector. Alternating Minimization is used to solve the new formulated problem in an iterative manner with a closed form solution update in every iteration. Simulation results demonstrate the efficacy of the proposed algorithm in the uplink massive MIMO setting for both coded and uncoded cases.

\end{abstract}

\begin{IEEEkeywords}
\textcolor{black}{MIMO, Signal Detection, Non-Convex Optimization, Alternating Minimization.}
\end{IEEEkeywords}

\section{Introduction}
\label{sec:intro}

Massive  multiple-input  multiple-output  (MIMO)  is  one of  the  most  promising  techniques  for  the 5th  Generation  (5G) networks due to its potential for enhancing throughput, spectra efficiency, and energy efficiency \cite{boccardi2014five}, \cite{van2017massive}. Massive MIMO requires the BS to be equipped with arrays of hundreds of antennas to serve tens of user terminals with single or multiple antennas.

%The main idea is to equip the base station (BS) with hundreds of antennas that serve a relatively small number of users in the orders of tens simultaneously and in the same frequency band. 

%Theoretical results for massive MIMO not only promise higher peak data rates, improved coverage, and longer range, but also that simple, low-complexity, and energy-efficient detection and precoding algorithms are able to achieve optimum performance in the large-antenna limit, i.e., where the number of BS antennas approaches infinity [1]?[4].

Theoretical results of massive MIMO show that linear detectors such as zero forcing (ZF) and minimum mean square error (MMSE) can achieve optimum performance under the favorable propagation conditions \cite{lu2014overview}. The favorable propagation condition means that the number of BS antennas grows very large compared to the number of UE antennas, which leads to the column-vectors of the propagation matrix to be asymptotically orthogonal. As a result of this orthogonality, the ZF and MMSE detectors can be implemented with simple diagonal inversions \cite{rusek2013scaling}. 

The current practical number of the BS antennas in massive MIMO systems is in the order of tens to a hundred. This is far from the theoretical limit that leads to the orthogonality mentioned above \cite{rusek2013scaling}. Therefore, the linear detectors still need to perform a matrix inversion for the signal detection of the uplink massive MIMO system, which entails extensive computational complexity \cite{zhang2018low}. Neumann series expansion, Cholesky decomposition, and successive over-relaxation techniques are proposed in the literature to reduce the complexity of the matrix inversion process in the MMSE detector \cite{gao2014matrix} \cite{yin2013implementation}. These approaches require a lower computational complexity than the exact matrix inversion while delivering near-optimal results only for theoretical massive MIMO configurations, \textcolor{black}{that is, when the ratio between the number of BS antennas and the number of single antenna users is large enough (e.g., $\ge 16$) \cite{wu2014large}. In realistic massive MIMO scenarios where this ratio is small, \cite{wu2014large} indicates that a large truncation order is required for Neumann series expansion technique which makes the computational complexity higher than the exact matrix inversion operations. Moreover, implementing these approximations significantly deteriorate performance as compared to the exact MMSE performance \cite{gao2014matrix}}. 

In this letter we present a novel formulation of a low complexity iterative algorithm based on Alternating Minimization, referred to as AltMin. This algorithm provides similar bit error rate (BER) performance to the exact matrix inverted MMSE technique, with one order less complexity.% compared to MMSE, with no matrix inversion. %It is also possible to have even better performance than the MMSE at the cost of increasing number of iterations. It

%It is important to note that, unlike previous work done in this problem, our proposed algorithm does not start from the MMSE problem and try to reduce inversion process, it is basically propose a new approximate formulation to the ML. Also the proposed algorithm approaches the MMSE performance even at large number of users (unlike previous work [ref]

The proposed algorithm approximates ML detection problem as a sum of convex functions based on decomposing the received vector into multiple vectors. Each vector represents the contribution of one of the transmitted symbols in the received vector. Then, Alternating Minimization is used to solve the new formulated problem in an iterative manner with a closed form solution update in every iteration that does not require any matrix inversion or any matrix multiplications. 

\textcolor{black}{Although there are several algorithms presented in the literature for the uplink massive MIMO detection problem, the key contributions include the following: (1) the re-formulation of the problem is novel, (2) the alternating minimization used to solve the proposed formulation has a closed form expression update at every iteration. The algorithm avoids any matrix inversion and any computations of the Gram matrix (i.e., $\textbf{H}^{H} \textbf{H}$). Thus, the proposed approach is low-complexity and has been shown empirically to perform better than the considered baseline. }
\vspace{-.1in}
\section{System Model and Problem Formulation}
\vspace{-.1in}
\subsection{System Model}
Consider the uplink data detection in a multi-user (MU) massive MIMO system with $N_r$ BS antennas and $N_t$ UE antennas. The vector $\tilde{\textbf{x} }= (\tilde{x}_1, \tilde{x}_2, \dots, \tilde{x}_{N_t})^T \in  \mathbb{C}^{N_t  \times1}$ represents the complex transmitted signal, where $x_k$ is the transmitted symbol for user $k$ with $E{|\tilde{x}_i|^ 2} = 1 , \forall i$. Each user transmits symbols over a flat fading channels and the signals are demodulated and sampled at the receiver. The vector $\tilde{\textbf{y}} = (\tilde{y}_1, \tilde{y}_2, \dots , \tilde{y}_{N_r} )^T \in \mathbb{C}^{{N_r} \times1}$ represents the complex received signal, and the channel matrix $\tilde{\textbf{H}}  \in \mathbb{C}^{{N_r} \times N_t} $ can be represented as $(\tilde{\textbf{h}}_1, \tilde{\textbf{h}}_2, \dots, \tilde{\textbf{h}}_{N_t})$, where $\tilde{\textbf{h}}_i =(\tilde{h}_{1,i}, \tilde{h}_{2,i},..., \tilde{h}_{N_r,i})^T  \in \mathbb{C}^{{N_r} \times 1}$, and \textcolor{black}{$\tilde{h}_{m,n}$ is the fading channel gain from transmit antenna $n$ to the receive antenna $m$ that is assumed to be i.i.d. (independent and identically distributed) complex Gaussian random variables with zero mean and unit variance, i.e., $\tilde{h}_{m,n}  \sim \mathcal{CN}(0,1)$}. The system can be modeled as; $\tilde{\textbf{{y}}}=\tilde{\textbf{{H}}} \tilde{\textbf{{x}}}+\tilde{\textbf{{v}}}$,
%In the Rayleigh flat fading channel model, all channel gains are assumed to be independent and identically distributed complex Gaussian random variables with zero mean and unit variance.
%\begin{equation}
%\tilde{\textbf{{y}}}=\tilde{\textbf{{H}}} \tilde{\textbf{{x}}}+\tilde{\textbf{{v}}}
%\label{eq: 1}
%\end{equation}
where the $\tilde{\textbf{v}} = (\tilde{v}_1, \tilde{v}_2, иии , \tilde{v}_{N_r} )^T \in \mathbb{C}^{{N_r} \times N_t} $ is the complex additive white Gaussian noise (AWGN) vector whose elements are mutually independent \textcolor{black}{with zero mean and variance $\sigma^2_v$}. %We assume that the transmitted signals are i.i.d Gaussian distribution and satisfy $E{|x_i|^ 2} = 1 , \forall i$, i.e. The symbols are normalized to unit energy.
\if0
\begin{figure}[htp]
%%%\begin{minipage}[b]{1.0\linewidth}
 \centering
\includegraphics[width=6cm, height=4cm]{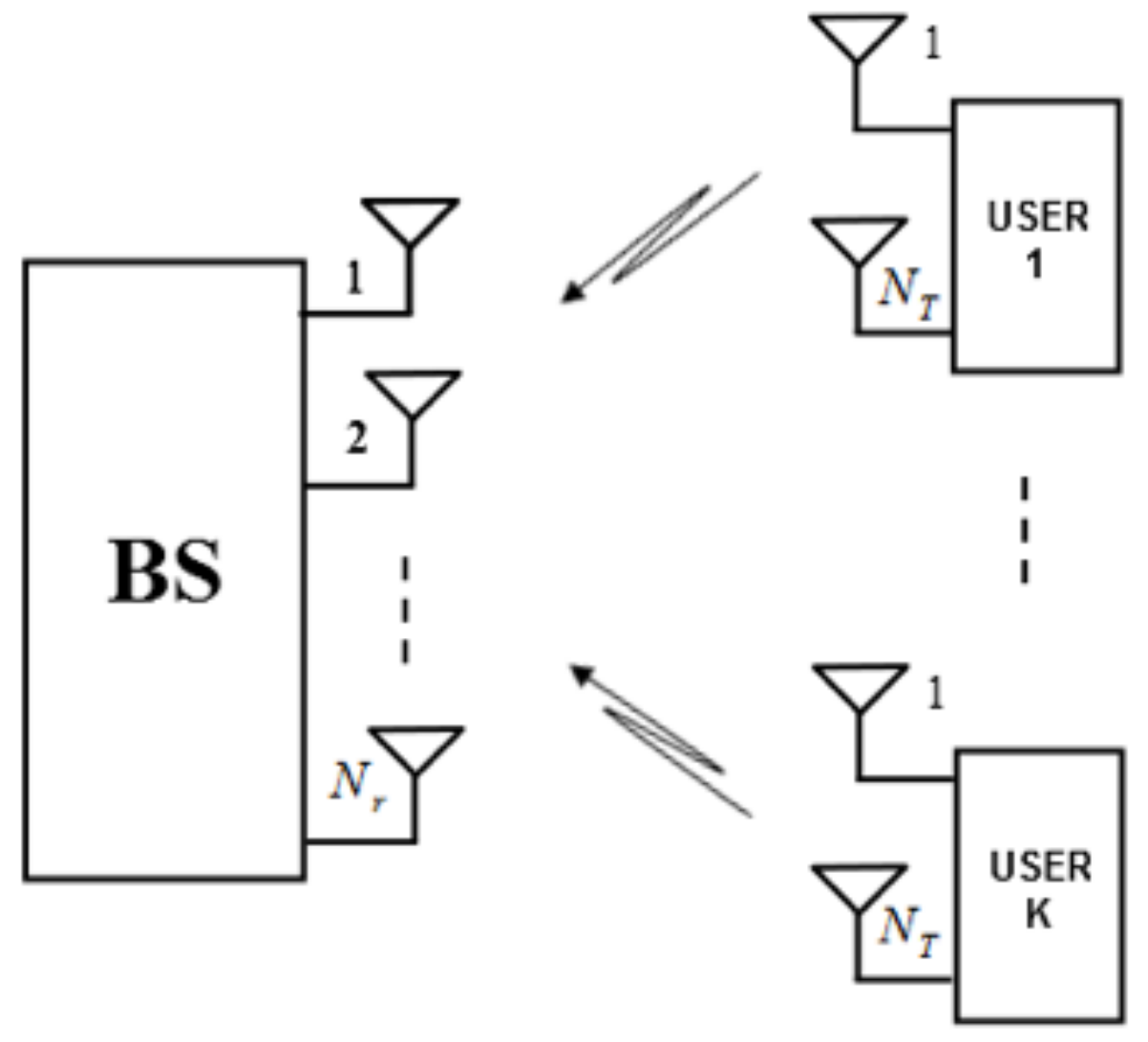}
\caption{\small  System Model for Massive MIMO uplink}
\label{fig:1}
\end{figure}
\fi
The corresponding real-valued system model is $\textbf{{y}}=\textbf{{H}}\textbf{{x}}+\textbf{{v}}$ \cite{elgabli2017two}, \cite{elghariani2016low}, \textcolor{black}{where $\textbf{{x}} \in  \mathbb{R}^{2N_t  \times1}$, $\textbf{{y}} \in  \mathbb{R}^{2N_r  \times1}$, $\textbf{{v}} \in  \mathbb{R}^{2N_r  \times1}$, and $\textbf{{H}} \in  \mathbb{R}^{2N_r  \times 2N_t}$}. The equivalent ML detection problem of the real model can be written in the form $\widehat{{\textbf{x}}}=  \underset{{{\textbf{x}}\in\chi^{2N_t}}}{\text{argmin}} \parallel{{\textbf{y}}}-{{\textbf{H}}}{{\textbf{x}}}\parallel_2^{2}$, where $ \chi=\frac{1}{\Gamma}\{-\sqrt{\textit{M}}+1,..,-1,1,...,\sqrt{\textit{M}}-1\}$, $\textit{M} \normalsize$ is the constellation size, and $\frac{1}{\Gamma}$ is for normalization factor.

%%%%%%%%%%%%%%%%%%%%%%%%%%%%%%%%%%%%%%%%%%%%%%%%%%%%%%%%%%%%%%%%%%%%%%%%%%%%%
%$\parallel{{\textbf{y}}}-{{\textbf{H}}}{{\textbf{x}}}\parallel_2^{2} =  \parallel{({{\textbf{y}_1}}-{{\textbf{h}_1}}{{{x}_1}})  + ({{\textbf{y}_2}}-{{\textbf{h}_2}}{{{x}_2}})  + \dots +  ({{\textbf{y}_{2N_t}}}-{{\textbf{h}_{2N_t}}}{{{x}_{2N_t}}}) }\parallel_2^{2}$

%$\parallel{{\textbf{y}}}-{{\textbf{H}}}{{\textbf{x}}}\parallel_2^{2} ~\leq  ~\sum_{i=1}^{2N_{t}} \parallel{{\textbf{y}_i}}-{{\textbf{h}_i}}{{{x}_i}}\parallel_2^{2}$

\if0
\small
\begin{equation}
\widehat{{\textbf{x}}}=  \underset{{{\textbf{x}}\in\chi^{2N_t}}}{\text{argmin}} \parallel{{\textbf{y}}}-{{\textbf{H}}}{{\textbf{x}}}\parallel_2^{2}
\label{eq: 6}
\end{equation}
\normalsize
\fi

\vspace{-.1cm}
\subsection{Problem Formulation}
First, we decompose the received vector $\textbf{y} $ into a linear combination of vectors so that $\textbf{y} = \sum_{i=1}^{2N_{t}}{\textbf{y}_i}$, where $\textbf{y}_i$ represents the contribution of the $i$-th transmitted symbol in the received vector. The element wise representation of the decomposed received vector is: 
\small
\\
\\
$
\begin{bmatrix}
y^{(1)}\\
.\\
.\\
y^{(k)}\\
.\\
.\\
y^{(2N_r)}\\
\end{bmatrix}
=
\begin{bmatrix}
y_1^{(1)}\\
.\\
.\\
y_1^{(k)}\\
.\\
.\\
y_1^{(2N_r)}\\
\end{bmatrix}
+\cdots+
\begin{bmatrix}
y_i^{(1)}\\
.\\
.\\
y_i^{(k)}\\
.\\
.\\
y_i^{(2N_r)}\\
\end{bmatrix}+\cdots+
\begin{bmatrix}
y_{2N_{t}}^{(1)}\\
.\\
.\\
y_{2N_{t}}^{(k)}\\
.\\
.\\
y_{2N_{t}}^{(2N_r)}\\
\end{bmatrix}
$
\\
\\
\normalsize
\textcolor{black}{where $y_i^{(k)}$ represents the $k$-th element of the decompose vector $\textbf{y}_i$}. Thus, the  $k$-th element of the real-valued received vector $\textbf{y}$ can be represented as ${y}^{(k)} = \sum_{i=1}^{2N_{t}}{{y}_i^{(k)}}$, $k=1,\dots, 2N_r$. Let \textcolor{black}{$\textbf{h}_i$ be the $i$-th column of the real-valued channel matrix $\textbf{H}$}.  Now, we relax the non-convexity constraint on the feasible set $\chi$, and approximate the ML problem based on the above decomposition as follows: 
\begin{equation}
\underset{{x_i},\textbf{y}_i \forall i}{\text{argmin}} \sum_{i=1}^{2N_{t}} \parallel{{\textbf{y}_i}}-{{\textbf{h}_i}}{{{x}_i}}\parallel_2^{2} \,\,\,\,\, \text{subject to}
\label{eq: 7}
\end{equation}
%\,\,\,\,\,\,\,\,\,\,\, subject to
\begin{equation}
\sum_{i=1}^{2N_{t}}{{y}_i}^{(k)}={y}^{(k)}, \forall k=1,\cdots,2N_r
\label{eq: 7c1}
\end{equation}
\begin{equation}
-l \leq {x}_i \leq l, \forall i=1,\cdots,2N_t
\label{eq: 7c2}
\end{equation}
Where $l= \frac{1}{\Gamma} (\sqrt{\textit{M}}-1)$. The objective function in (\ref{eq: 7}) is a sum of separable terms, each of which is a function of only one symbol and its contribution in the received vector. In the next section, we use AltMin to solve the proposed formulation.

 \if0
 \subsection{K.K.T Conditions}
 
 The problem in~\eqref{eq: 7}-\eqref{eq: 7c2} is convex optimization problem with the linear constraints, so K.K.T. conditions are sufficient and necessary for the optimal solution. In this section, we use the K.K.T. conditions
to characterize the optimality of the problem in~\eqref{eq: 7}-\eqref{eq: 7c2}.
 
 The Lagrangian function for~\eqref{eq: 7}-\eqref{eq: 7c2} is defined as follows:
 
 \begin{align}
&\mathcal{L}=\sum_{i=1}^{2N_{t}} \parallel{{\textbf{y}_i}}-{{\textbf{H}_i}}{{\textbf{x}_i}}\parallel_2^{2}+\sum_{k=1}^{2N_r}\lambda^{k}(\textbf{y}^{(k)}-\sum_{i=1}^{2N_{t}}{\textbf{y}_i}^{(k)})\nonumber\\&+\mu_1(l-\textbf{x}_i)+\mu_2(l+\textbf{x}_i)
\end{align}

Then, the following K.K.T. conditions, which are sufficient and necessary for the optimal solution to the convex optimization problem in~\eqref{eq: 7}-\eqref{eq: 7c2}, are obtained from the Lagrangian function:

\begin{equation}
\textbf{y}_i^{(k)}=\textbf{H}_i^{(k)}\textbf{x}_i+\lambda^{(k)}/2, \forall i,k
\label{eq: 10}
\end{equation}

\begin{equation}
\lambda^{(k)}=\frac{1}{N_t}(\textbf{y}^{(k)}-\sum_{i=1}^{2N_{t}}\textbf{H}_i^{(k)}\textbf{x}_i), \forall i,k
\label{eq: 11}
\end{equation}

\begin{equation}
2\textbf{x}_i\sum_{k=1}^{2N_t}{\textbf{H}_i^{(k)}}^2-2\sum_{k=1}^{2N_t}\textbf{y}_i^{(k)}\textbf{H}_i^{(k)}-\mu_1+\mu_2=0, \forall i,k
\label{eq: 131}
\end{equation}

\begin{equation}
\mu_1^{(i)}(l-\textbf{x}_i)=0, \forall i
\label{eq: 141}
\end{equation}

\begin{equation}
\mu_2^{(i)}(l+\textbf{x}_i)=0, \forall i
\label{eq: 151}
\end{equation}
\fi

\vspace{-.1in}
\section{Proposed Algorithm}

%\subsection{Alternating minimization}
The optimization problem~\eqref{eq: 7}-\eqref{eq: 7c2} is strictly and jointly convex with respect to \textbf{x} and ${\bf{\cal Y}}$ where, ${\bf{\cal Y}}=\big\{\textbf{y}_i \forall i: i\in\{1,2,\cdots,2N_t\}\big\}$. Moreover, there is no common constraint that combines both \textbf{x}  and ${\bf{\cal Y}}$. Therefore, in order to efficiently solve this problem, we first decompose it into the following two subproblems:
\begin{itemize}
\item Given $\textbf{x}$, we obtain ${\bf{\cal Y}}$ by solving %the following subproblem:
\begin{equation}
\underset{\textbf{y}_i}{\text{argmin}}\sum_{i=1}^{2N_{t}} \parallel{{\textbf{y}_i}}-{{\textbf{h}_i}}{x_i}\parallel_2^{2}\quad \text{subject to} \, \eqref{eq: 7c1}
\label{eq: 8}
\end{equation}
\item Given ${\bf{\cal Y}}$, we obtain \textbf{x} by solving %the following subproblem:
\begin{equation}
\underset{x_i}{\text{argmin}}\sum_{i=1}^{2N_{t}} \parallel{{\textbf{y}_i}}-{{\textbf{h}_i}}{{x_i}}\parallel_2^{2}\quad \text{subject to} \, \eqref{eq: 7c2}
\label{eq: 9}
\end{equation}
%\textcolor{red}{Then, we propose, AltMin (Algorithm 1), an iterative algorithm that alternatively solves~\eqref{eq: 8} for ${\bf{\cal Y}}$ and~\eqref{eq: 9} for $\textbf{x}$.}
%\textcolor{red}{Then, we propose, AltMin, an iterative algorithm that alternatively solves~\eqref{eq: 8} for ${\bf{\cal Y}}$ and~\eqref{eq: 9} for $\textbf{x}$.}

\end{itemize}
Then, we propose, AltMin, an iterative algorithm that alternatively solves~\eqref{eq: 8} for ${\bf{\cal Y}}$ and~\eqref{eq: 9} for $\textbf{x}$ given the other. Note that the respective \textcolor{black}{Karush-Kuhn-Tucker  (K.K.T)} conditions \cite{kuhn1951} of the above two subproblems form the complete set of the K.K.T. conditions for the original problem.

\vspace{-.1in}
\subsection{{\bf Solving the subproblem \eqref{eq: 8} }}

The Lagrangian function of~\eqref{eq: 8} can be written as:
\begin{equation}
\mathcal{L}=\sum_{i=1}^{2N_{t}} \parallel{{\textbf{y}_i}}-{{\textbf{h}_i}}{x_i}\parallel_2^{2}+\sum_{k=1}^{2N_r}\lambda^{k}(y^{(k)}-\sum_{i=1}^{2N_{t}}{y_i}^{(k)})
\label{eq: 100}
\end{equation}
Therefore, by solving the above Lagrangian function, which is a function of $\lambda^{(k)}$  and $\textbf{y}_i$, we get the following closed form expression updates for every element of $\lambda^{(k)}$ and $y_i^{(k)}$:
\begin{equation}
\lambda^{(k)}=C \cdot\frac{1}{N_t}(y^{(k)}-\sum_{i=1}^{2N_{t}}h_i^{(k)}x_i), \forall k
\label{eq: 11}
\end{equation}

\begin{equation}
y_i^{(k)}=h_i^{(k)}x_i+\lambda^{(k)}/2, \forall i,k
\label{eq: 10}
\end{equation}
%Therefore, the update of the $k$-th element in the vector $\textbf{y}_i$ reduces to solving~\eqref{eq: 11} in order to find $\lambda^{(k)}$ and then substituting in~\eqref{eq: 10} to update $y_i^{(k)}$. 
\textcolor{black}{where $h_i^{(k)}$ represents the k-th element in column vector $\textbf{h}_i$, which corresponds to the $i^{\text{th}}$ column of the the real-valued channel matrix $\textbf{H}$}. Note that in the update of $\lambda^{(k)}$, we introduce the scaling factor, $C$. We show in Section III-C that the proposed algorithm is optimal when $C=1$. However, we empirically find that when $C=N_t$, the number of iterations for convergence drops significantly, see Section IV-A. 
\vspace{-.1cm}
\subsection{{\bf Solving the subproblem~\eqref{eq: 9} }}

The objective function~\eqref{eq: 9} is separable with respect to every element $x_i$ in the vector \textbf{x}, and no constraint combines the elements of \textbf{x}. Therefore, the update of the $i$-th element in the vector \textbf{x} reduces to solving the following subproblem.

\begin{equation}
\underset{x_i}{\text{argmin}} \sum_{k=1}^{2N_r}(y_i^{(k)}-h_i^{(k)}x_i)^{2} \quad \text{subject to }~\eqref{eq: 7c2}%\parallel\textbf{y}_i-\textbf{h}_ix_i\parallel_2^2
\label{eq: 12}
\end{equation}
The corresponding Lagrangian function of ~\eqref{eq: 12} is:
\begin{equation}
\mathcal{L}=\sum_{k=1}^{2N_r}(y_i^{(k)}-h_i^{(k)}x_i)^{2}+\mu_1^{(i)}(l-x_i)+\mu_2^{(i)}(l+x_i)
\end{equation}
Then, the following K.K.T. conditions \cite{kuhn1951} which are sufficient and necessary for the optimal solution to the convex optimization problem in \eqref{eq: 12} are:% obtained from the Lagrangian function:

\begin{equation}
2x_i\sum_{k=1}^{2N_r}{h_i^{(k)}}^2-2\sum_{k=1}^{2N_r}y_i^{(k)}h_i^{(k)}-\mu_1^{(i)}+\mu_2^{(i)}=0
\label{eq: 13}
\end{equation}
\begin{equation}
\mu_1^{(i)}(l-x_i)=0, \,\,\mu_2^{(i)}(l+x_i)=0, \,\, \mu_1^{(i)},\mu_2^{(i)}\geq 0
\label{eq: 14}
\end{equation}
\if0
\begin{equation}
\mu_1^{(i)}(l-x_i)=0
\label{eq: 14}
\end{equation}

\begin{equation}
\mu_2^{(i)}(l+x_i)=0
\label{eq: 15}
\end{equation}

\begin{equation}
\mu_1^{(i)},\mu_2^{(i)}\geq 0
\label{eq: 16}
\end{equation}
\fi
In order to solve~\eqref{eq: 13}-\eqref{eq: 14} for every element in the vector \textbf{x}, among the following $\{\mu_1^{(i)}, \mu_2^{(i)}, x_i\}$ choices, we choose the one that minimizes~\eqref{eq: 12}: 
\vspace{-.1cm}
  \begin{equation} \mu_1^{(i)}=0,\text{ and }\mu_2^{(i)}=0 \rightarrow x_i=\frac{\sum_{k=1}^{2N_r}y_i^{(k)}h_i^{(k)}}{\sum_{k=1}^{2N_r}{h_i^{(k)}}^2}
  \label{eq:x1Update}
\end{equation}  

  \begin{equation} \mu_1^{(i)}=0,\text{ and }\mu_2^{(i)} \neq 0\rightarrow x_i=-l \label{eq:x1Update2}\end{equation}

 \begin{equation} \mu_1^{(i)}\neq 0,\text{ and }\mu_2^{(i)}= 0\rightarrow x_i=l \label{eq:x1Update3}\end{equation}
Note, we exclude the choice  $\mu_1^{(i)}\neq 0$, and $\mu_2^{(i)} \neq 0$ since $x_i$ cannot be equal to $-l$ and $l$ at the same time. %Finally, we can clearly see that solving~\eqref{eq: 13}-\eqref{eq: 16} yields a closed form expression update for the vector $\textbf{x}$. 

%To obtain the optimal solution to the proposed optimization problem~\eqref{eq: 7}-\eqref{eq: 7c2}, we propose, AltMin, an iterative algorithm that alternatively solves~\eqref{eq: 8} for ${\bf{\cal Y}}$ and~\eqref{eq: 9} for $\textbf{x}$. To perform the algorithm, we initially set $\textbf{x}$ to $\textbf{0}$, and solve~\eqref{eq: 8} to obtain the initial ${\bf{\cal Y}}$; with the updated ${\bf{\cal Y}}$, we then solve ~\eqref{eq: 9} to update $\textbf{x}$.
%The proposed iterative algorithm is summarized in Algorithm 1. In the next subsection, we will show that Algorithm 1 converges and the pairwise optimal $\textbf{x}$ and ${\bf{\cal Y}}$ can be obtained, which is also the optimal solution to the problem in ~\eqref{eq: 7}-\eqref{eq: 7c2}.

To obtain the optimal solution to the proposed optimization problem~\eqref{eq: 7}-\eqref{eq: 7c2}, AltMin solves~\eqref{eq: 8} for ${\bf{\cal Y}}$ and~\eqref{eq: 9} for $\textbf{x}$. To perform the algorithm, we initially set $\textbf{x}$ to $\textbf{0}$, and solve~\eqref{eq: 8} to obtain the initial ${\bf{\cal Y}}$. With updated ${\bf{\cal Y}}$, we  solve ~\eqref{eq: 9} to update $\textbf{x}$. The proposed algorithm is summarized in Algorithm \ref{alg1}. 
%In the next subsection, we will show that Algorithm 1 converges and the pairwise optimal $\textbf{x}$ and ${\bf{\cal Y}}$ can be obtained, which is also the optimal solution to the problem in ~\eqref{eq: 7}-\eqref{eq: 7c2}.

\begin{figure}

		%\vspace{-.6in}
		\begin{minipage}{\linewidth}
			\begin{algorithm}[H]
					{\tiny 
				\small
				\begin{algorithmic}[1]
				    \STATE {\bf 1: Initilization}
    
   \STATE $t=0$, $x_i=0 \forall i$ 
   \STATE update $\lambda^{(k)}\forall k$ according to equation~\eqref{eq: 11}
   \STATE update $y_i^{(k)}\forall i,k$ according to equation~\eqref{eq: 10}
%\STATE Solve~\eqref{eq: 8}-\eqref{eq: 8c1} to update ${\bf{\cal Y}}$
\STATE $V^{(t)}=\sum_{i=1}^{2N_{t}} \parallel{{\textbf{y}_i}}-{{\textbf{h}_i}}{{x_i}}\parallel_2^{2}$
\STATE $\delta =$ convergence tolerance%, $T=$ Maximum number of iterations
\STATE $T=$ Maximum number of iterations

    \STATE {\bf 2: Alternating Minimization:}
    \REPEAT%\STATE {\bf REPEAT}
 \STATE $t \leftarrow t+1$
    \STATE update $x_i\forall i$ according to ~\eqref{eq:x1Update}
 \STATE update $\lambda^{(k)}\forall k$ according to equation~\eqref{eq: 11}
   \STATE update $y_i^{(k)}\forall i,k$ according to equation~\eqref{eq: 10}

\STATE $V^{(t)}=\sum_{i=1}^{2N_{t}} \parallel{{\textbf{y}_i}}-{{\textbf{h}_i}}{{x_i}}\parallel_2^{2}$
\UNTIL  $|V^{(t)}-V^{(t-1)}| < \delta$ {\bf OR}  $t > T$%\STATE {\bf UNTIL } $|V^{(t)}-V^{(t-1)}| < \delta$ {\bf OR}  $t > T$

   				\end{algorithmic}
				\caption{Proposed Algorithm } \label{alg1}
}						
			\end{algorithm}
		\end{minipage}
		\vspace{-.2in}
	\end{figure}

\vspace{-.1in}
\subsection{{\bf Optimality of AltMin Algorithm in solving~\eqref{eq: 7}-\eqref{eq: 7c2}}}

%\textcolor{red}{We now show the optimality of AltMin when $C=1$}. I
In order to show the optimality of the iterative algorithm (when $C=1$), we first give the following lemma.
\begin{lemma}
Given $\textbf{x}$, the optimal ${\bf{\cal Y}}$ for the problem~\eqref{eq: 8}  is unique. Similarly, given ${\bf{\cal Y}}$, the optimal $\textbf{x}$ to~\eqref{eq: 9} is unique. \label{lem1}
%\label{lemma}
\end{lemma}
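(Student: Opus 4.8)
The plan is to establish, for each subproblem separately, that its objective is \emph{strictly} convex over the (convex) feasible region and that a minimizer exists; since a strictly convex function attains its minimum at most once on a convex set, this immediately yields uniqueness. The whole argument therefore reduces to two Hessian computations together with a routine existence check.

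First I would treat subproblem~\eqref{eq: 8} with $\textbf{x}$ held fixed, viewing the objective $\sum_{i=1}^{2N_t}\|\textbf{y}_i-\textbf{h}_i x_i\|_2^2=\sum_{i,k}(y_i^{(k)}-h_i^{(k)}x_i)^2$ as a function of the stacked vector ${\bf{\cal Y}}=(y_i^{(k)})_{i,k}$. Each coordinate $y_i^{(k)}$ enters exactly one squared term, so the Hessian in ${\bf{\cal Y}}$ is $2I\succ 0$ and the objective is strictly convex \emph{independently of the channel}. The set defined by~\eqref{eq: 7c1} is the intersection of the $2N_r$ hyperplanes $\sum_i y_i^{(k)}=y^{(k)}$, hence a nonempty, closed, affine (in particular convex) set, and restricting a strictly convex function to an affine set keeps it strictly convex. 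Coercivity of the quadratic (it grows without bound as $\|{\bf{\cal Y}}\|\to\infty$) guarantees that a minimizer exists on this closed set, and strict convexity forces it to be unique; equivalently, the closed-form update~\eqref{eq: 11}--\eqref{eq: 10} exhibits the minimizer directly.

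Next I would handle subproblem~\eqref{eq: 9} with ${\bf{\cal Y}}$ held fixed. Here the objective separates as $\sum_{i=1}^{2N_t} f_i(x_i)$ with $f_i(x_i)=\|\textbf{h}_i\|_2^2\,x_i^2-2\big(\sum_{k}y_i^{(k)}h_i^{(k)}\big)x_i+\text{const}$, so each $f_i$ is a one-dimensional quadratic whose leading coefficient is $\|\textbf{h}_i\|_2^2$. Provided $\textbf{h}_i\neq\textbf{0}$ this coefficient is strictly positive, every $f_i$ is strictly convex, and therefore so is the separable sum in $\textbf{x}$. The feasible box $[-l,l]^{2N_t}$ from~\eqref{eq: 7c2} is compact and convex, so a minimizer exists and, by strict convexity, is unique.

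The hard part---really the only delicate point---is the strict convexity of~\eqref{eq: 9}, which hinges on every real channel column satisfying $\textbf{h}_i\neq\textbf{0}$: if some $\textbf{h}_i=\textbf{0}$, the corresponding $x_i$ drops out of the objective and is no longer pinned down. I would resolve this by appealing to the i.i.d.\ $\mathcal{CN}(0,1)$ channel model, under which each column $\textbf{h}_i$ is nonzero with probability one, so $\|\textbf{h}_i\|_2^2>0$ holds almost surely and strict convexity is in force. With this observation both halves of the lemma follow from the strict-convexity-plus-existence template above.
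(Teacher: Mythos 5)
Your proof is correct and follows essentially the same route as the paper, which simply asserts uniqueness from the strict convexity of each subproblem in its own block of variables. You go further by supplying the existence arguments (coercivity for ${\bf{\cal Y}}$, compactness of the box for $\textbf{x}$) and by flagging the one genuine caveat the paper leaves implicit---that strict convexity in $x_i$ requires $\textbf{h}_i\neq\textbf{0}$, which holds almost surely under the i.i.d.\ Gaussian channel model---so your write-up is a faithful but more careful version of the paper's argument.
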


\begin{proof}
This lemma can be obtained by verifying the strict convexity of~\eqref{eq: 7}-\eqref{eq: 7c2} with respect to $\textbf{x}$ given ${\bf{\cal Y}}$, and with respect to ${\bf{\cal Y}}$ given $\textbf{x}$.  %$\blacksquare$
\end{proof}

\begin{theorem}
AltMin Algorithm converges, and (${\bf{\cal Y}}$, $\textbf{x}$) is optimal to~\eqref{eq: 7}-\eqref{eq: 7c2}. 
\label{theorem1}
\end{theorem}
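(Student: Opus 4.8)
The plan is to prove the two assertions—that AltMin converges and that its limit solves~\eqref{eq: 7}--\eqref{eq: 7c2}—by combining a monotone-descent argument for convergence with a K.K.T.-based argument for optimality, leaning on Lemma~\ref{lem1} for the uniqueness of each block update and on the joint convexity of~\eqref{eq: 7}--\eqref{eq: 7c2} for global sufficiency of the K.K.T. conditions.

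First I would establish that the objective decreases monotonically. Each iteration performs two exact block minimizations: the ${\bf{\cal Y}}$-update solves~\eqref{eq: 8} in closed form via~\eqref{eq: 11}--\eqref{eq: 10} (with $C=1$) for fixed $\textbf{x}$, and the $\textbf{x}$-update solves~\eqref{eq: 9} via~\eqref{eq: 13}--\eqref{eq: 14} for fixed ${\bf{\cal Y}}$. Since each step minimizes the common objective over one block, neither can increase it, so $V^{(t)}$ is non-increasing. As $V^{(t)}=\sum_{i=1}^{2N_t}\parallel\textbf{y}_i-\textbf{h}_ix_i\parallel_2^2\ge 0$ is bounded below, the sequence $\{V^{(t)}\}$ converges to some $V^\star$. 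The feasibility constraint~\eqref{eq: 7c2} confines the $\textbf{x}$-iterates to the compact box $[-l,l]^{2N_t}$, and the closed-form map~\eqref{eq: 11}--\eqref{eq: 10} then keeps the induced ${\bf{\cal Y}}$-iterates in a compact set as well; hence the iterate sequence $(\textbf{x}^{(t)},{\bf{\cal Y}}^{(t)})$ admits a convergent subsequence.

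Next I would identify the limit. Passing to the limit along a convergent subsequence and invoking the continuity of the update maps together with the single-valuedness guaranteed by Lemma~\ref{lem1}, I would argue that the limit is a fixed point of the alternating scheme. At such a fixed point the most recent ${\bf{\cal Y}}$-update still solves~\eqref{eq: 8}, so the K.K.T. conditions~\eqref{eq: 11}--\eqref{eq: 10} hold, and the most recent $\textbf{x}$-update still solves~\eqref{eq: 9}, so the K.K.T. conditions~\eqref{eq: 13}--\eqref{eq: 14} hold. As noted just before the theorem, these two groups together form the complete K.K.T. system of the original problem~\eqref{eq: 7}--\eqref{eq: 7c2}. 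Because that problem is convex with affine constraints, the K.K.T. conditions are sufficient for global optimality, so the limiting $(\textbf{x},{\bf{\cal Y}})$ is globally optimal; uniqueness of the block minimizers (Lemma~\ref{lem1}) then forces every subsequential limit to coincide, giving convergence of the full sequence.

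The step I expect to be the main obstacle is precisely this promotion of objective convergence to iterate convergence, together with the verification that the limit is a genuine fixed point satisfying \emph{both} subproblems' K.K.T. conditions simultaneously. For a merely jointly (not strictly jointly) convex objective, alternating minimization can in principle stall at a non-minimizer, so the argument must carefully exploit the strict convexity of each block subproblem from Lemma~\ref{lem1}, the compactness induced by~\eqref{eq: 7c2}, and the continuity of the closed-form updates~\eqref{eq: 11}--\eqref{eq: 10} to rule this out.
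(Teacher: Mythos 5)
Your argument is correct in substance but follows a genuinely different route from the paper. The paper proves Theorem~\ref{theorem1} by invoking the general convergence theorem of Beck (2015) for alternating minimization and checking its five hypotheses: the constraints are linear and do not couple the two blocks, the objective is continuously differentiable and convex (positive semidefinite Hessian), the gradient is uniformly Lipschitz in each block, and each subproblem has a minimizer (Lemma~\ref{lem1}). You instead give the classical self-contained two-block argument: exact block minimization makes $V^{(t)}$ non-increasing and bounded below, the box constraint~\eqref{eq: 7c2} plus continuity of the closed-form ${\bf{\cal Y}}$-update gives compactness of the iterates, a subsequential limit is a fixed point satisfying both subproblems' K.K.T.\ systems, and since these together form the full K.K.T.\ system of the convex problem with affine constraints, the limit is globally optimal. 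Your route buys transparency and avoids reliance on an external theorem, and it correctly identifies (and addresses) the real danger point, namely that joint convexity alone does not prevent two-block schemes from stalling. Two caveats: first, your final step, deducing convergence of the \emph{full} sequence from uniqueness of the \emph{block} minimizers, does not quite follow --- distinct subsequences could in principle converge to distinct global minimizers, since the objective is not strictly \emph{jointly} convex (the per-term Hessian $2\bigl(\begin{smallmatrix} I & -\textbf{h}_i \\ -\textbf{h}_i^{T} & \|\textbf{h}_i\|^2 \end{smallmatrix}\bigr)$ is singular); one would need to argue uniqueness of the global minimizer, e.g.\ by eliminating ${\bf{\cal Y}}$ to reduce to box-constrained least squares with a full-column-rank $\textbf{H}$, or settle for the statement that every limit point is optimal, which is all the theorem really needs. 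Second, the passage from ``objective values converge'' to ``the limit point satisfies both blocks' optimality conditions simultaneously'' deserves the explicit continuity-plus-uniqueness argument you sketch; as written it is an outline rather than a proof, but the outline is sound.
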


\begin{proof}
\textcolor{black}{We use the result in \cite{beck2015convergence} to show the optimality and convergence of alternating minimization, which requires us to show five conditions. The first condition is satisfied since the constraints do not have both variables together and are linear functions. The second condition indicates that the objective is a continuously differentiable convex function which also holds as the Hessian matrix of the objective function is positive semidefinite. The third and the fourth conditions are satisfied since the objective function is uniformly Lipschitz continuous with respect to each of the variable. The fifth condition is that the two sub-problems have minimizers which holds by Lemma \ref{lem1}. This proves the result as in the statement of the Theorem. }
%\textcolor{blue}{Since ${\bf{\cal Y}}$, and $\textbf{x}$ are updated by successively solving~\eqref{eq: 8} and ~\eqref{eq: 9} in every iteration, the obtained objective value is non-increasing over iterations. The objective function is non-increasing because it is jointly convex with respect to both ${\bf{\cal Y}}$, and $\textbf{x}$. Moreover, the  dual variables associated with constraints on ${\bf{\cal Y}}$, and $\textbf{x}$ are disjoint. i.e, there is no a constraint that combines ${\bf{\cal Y}}$, and $\textbf{x}$.  At the convergence point, ${\bf{\cal Y}}$, and $\textbf{x}$ are jointly optimal to the subproblems, since otherwise the objective value can be further decreased in the next iteration. Then, by Lemma 2, the obtained (${\bf{\cal Y}}$, $\textbf{x}$) is the optimal solution to~\eqref{eq: 7}-\eqref{eq: 7c2}. $\blacksquare$.}
\end{proof}
%\cite{wang2016optimal}

%intuitively, the optimality can be proved by first verifying the pairwise optimality of the solution upon convergence
%and then showing it cannot be suboptimal.

\vspace{-.1in}
\subsection{{\bf Complexity Analysis of AltMin}}

The update of ${\bf{\cal Y}}$ has a complexity of $O(N_tN_r)$, while the update of $\textbf{x}$ has a complexity of $O(N_t)$. The AltMin algorithm performs $T$ iterations between these updates before converging to the optimal solution. Therefore, the overall complexity of AltMin in solving~\eqref{eq: 7}-\eqref{eq: 7c2} is  $O(T N_t N_r)$, which is notably lower than $O(N_t^3)$ of MMSE for large $N_t$, as demonstrated in Section \ref{numRes}.

%Alternating minimization suffers from slow conversion $O(1/\sqrt(t))$ where $t$ is the iteration number, so $T$ could be large. Therefore, in next section we consider ADMM which is shown to has faster convergence rate $O(1/t)$. Moreover, as we will show in the evaluation section, stopping the algorithm after few iterations and rounding solutions to the nearest feasible solution will still yield comparable performance compared to MMSE estimator and lower complexity.

\vspace{-.1in}
\section{Numerical Results}
\label{numRes}
Numerical results for coded and uncoded uplink massive MIMO systems in a block flat fading channel is presented.  We assume perfect knowledge of the channel state information at the receiver. QPSK modulation is considered for demonstration; however, the proposed algorithm can be extended for higher QAM modulations in a straightforward manner. The performance and computational complexity of the proposed algorithm is compared with the linear MMSE detector.

\begin{figure*}[!htb]
\minipage{0.32\textwidth}
  \includegraphics[width=6cm, height=4cm]{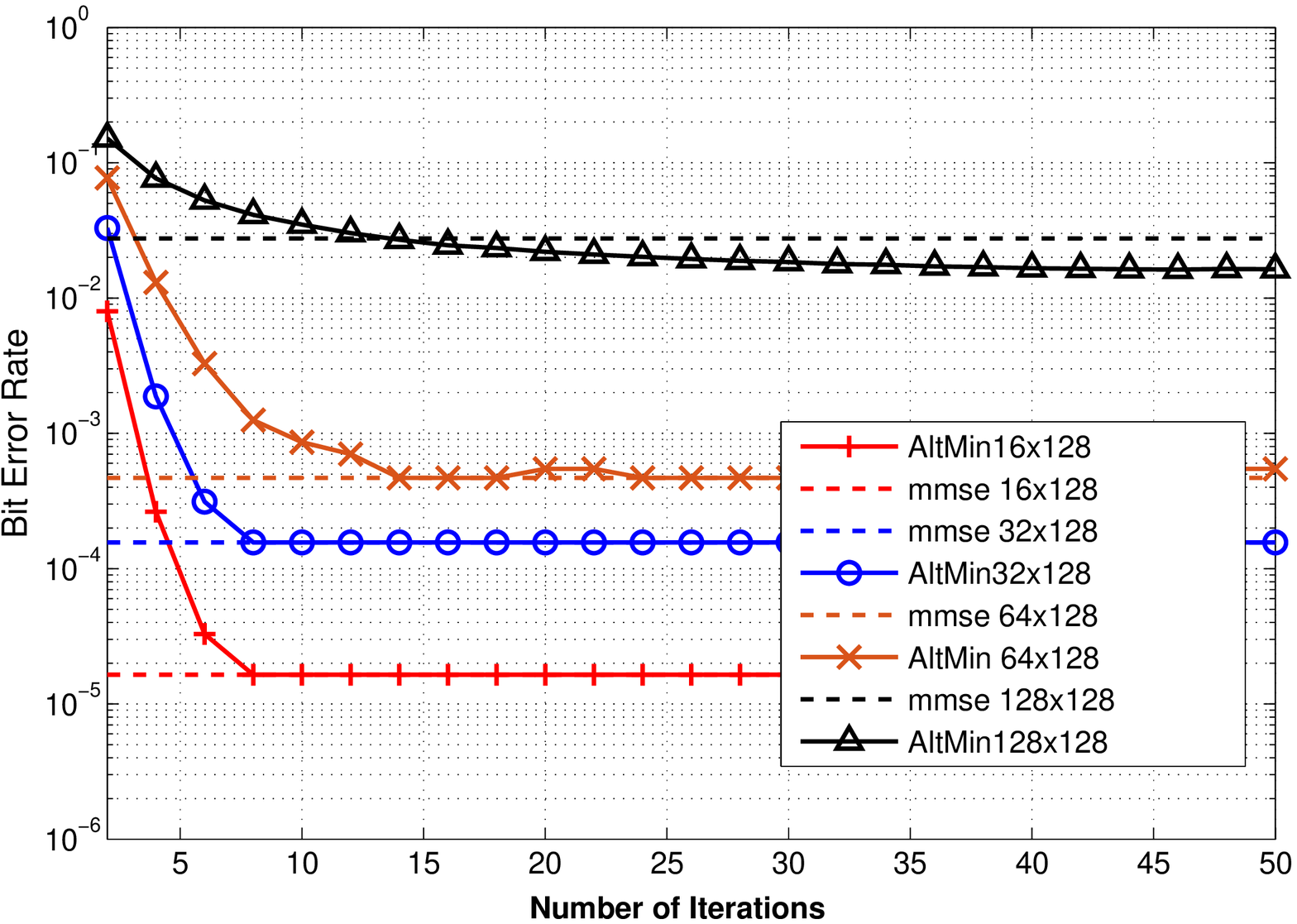}
  \caption{\small  BER performance versus AltMin Iterations at SNR=12 dB}
  \label{fig: 2}
\endminipage\hfill
\minipage{0.32\textwidth}
  \includegraphics[width=6cm, height=4cm]{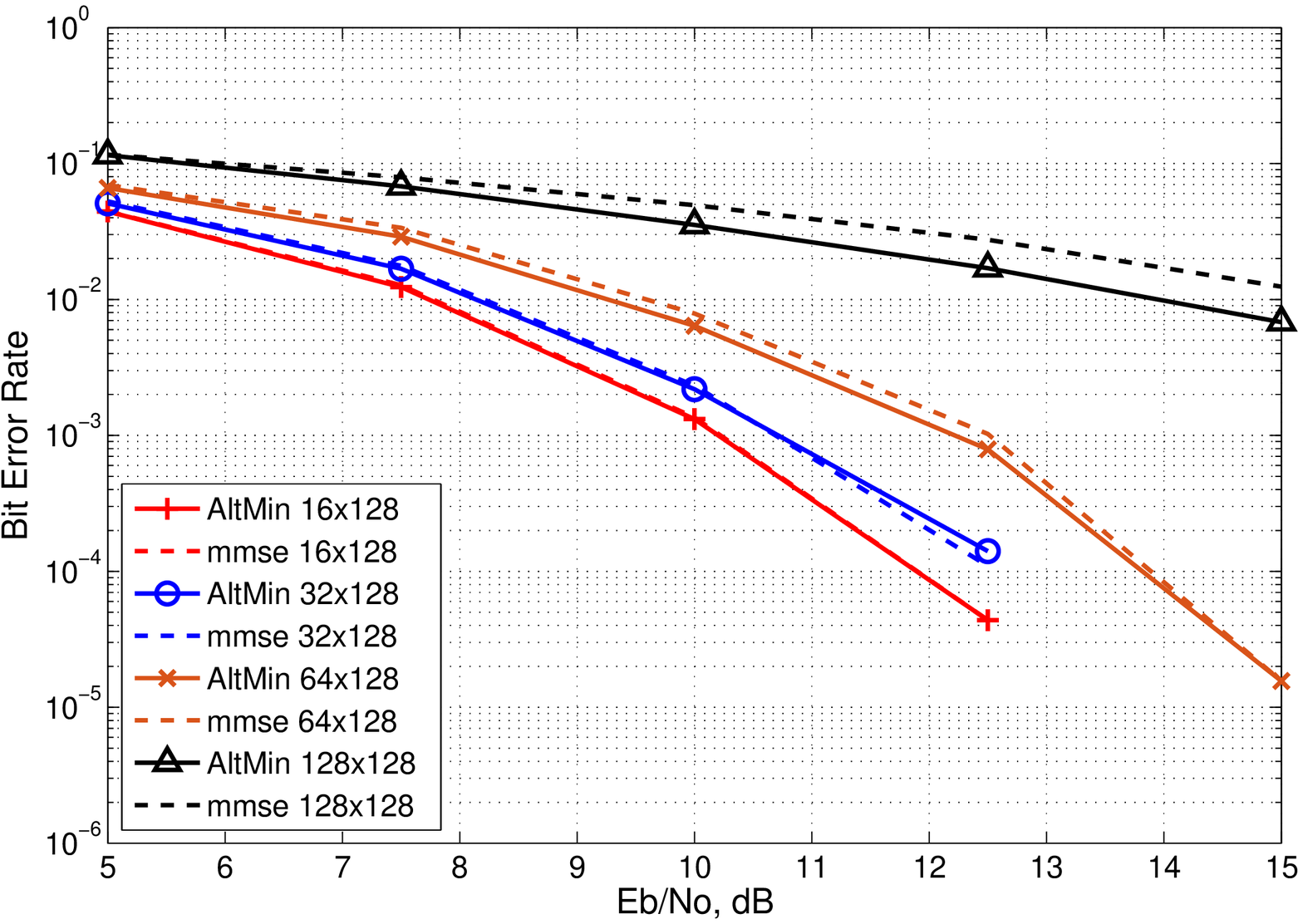}
  \caption{\small BER performance comparison for different massive MIMO configurations }
  \label{fig: 1}
\endminipage\hfill
\minipage{0.32\textwidth}%
  \includegraphics[width=6cm, height=4cm]{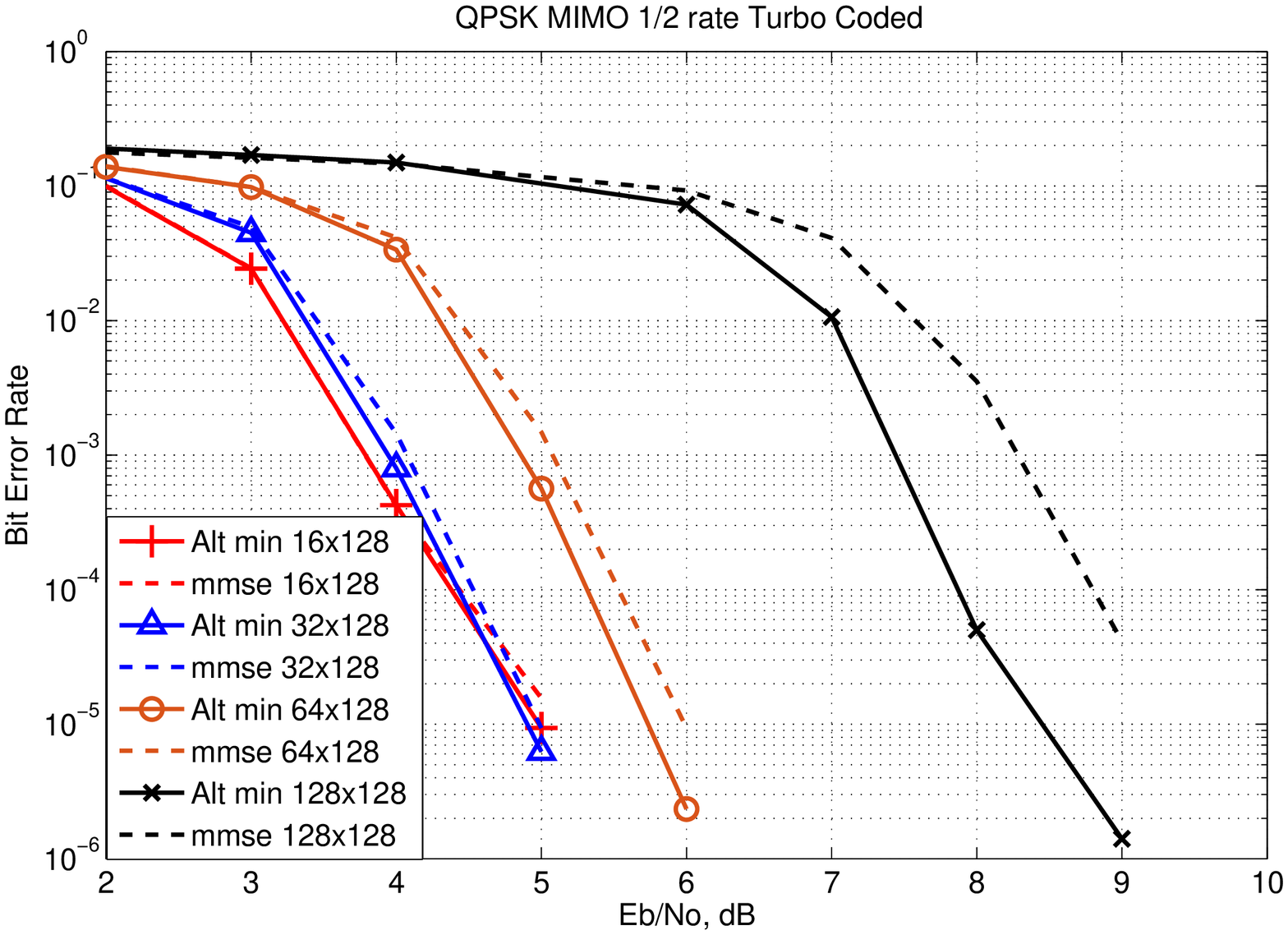}
  \caption{\small 1/2 Turbo coded BER performance of QPSK}
  \label{fig: 7.0}
\endminipage
\vspace{-.2in}
\end{figure*}

\vspace{-.1cm}
\subsection{Number of AltMin Iterations }
In this simulation experiment, we examine the number of iterations required by the AltMin algorithm such that the BER performance of both the proposed algorithm and the MMSE technique are equal, for various massive MIMO configurations. We fix convergence tolerance $\delta $ at $10^{-3}$, while the maximum number of iterations $T$ is  changing in a step size of 2. We also set the initial guess of $\textbf{x}$ in the AltMin algorithm to zeros, and $C=N_t$ in the update of $\lambda^{(k)}$ in (\ref{eq: 11}). 

Fig. \ref{fig: 2} shows BER performance versus maximum number of iterations for $N_t \times N_r =16\times 128, 32 \times 128, 64\times 128$, and $128 \times 128$ configurations, at SNR=12 dB. It can be noticed that the larger the ratio $\frac {N_r}{N_t}$ is, the smaller the number of iterations is required by AltMin to reach the MMSE performance. For example, 8 iterations is required for $16 \times 128$ configuration, while 15 iterations is required for $128 \times 128$ configuration. This results show that, on the average, to reach the MMSE performance, the required number of iterations, $T$, is much smaller than $N_t$. Consequently, the complexity of the proposed algorithm becomes in the order of $O( N_t N_r)$.%It can also be observed that no matter how as the number of iterations increases
%The MMSE performance is also shown in the same figure to examine the number of iterations required by AltMin to match the MMSE performance.
\if0
\begin{figure}[htp]
%%%%\begin{minipage}[b]{1.0\linewidth}
\centering
\includegraphics[width=8cm, height=6cm]{new_results/qpsk_effect_iters_BER_ntx128_snr12.eps}
\caption{\small  BER performance versus $\#$ of Iterations of AltMin for different massive MIMO configurations at SNR=12 dB }
\label{fig: 2}
\end{figure}
\fi

\vspace{-.1cm}
\subsection{BER performance comparison}
In this subsection, we present the BER performance of both the proposed algorithm and the MMSE technique with respect to SNR at various massive MIMO configuration. For each SNR value, we stop the AltMin algorithm at the iteration number at which its performance matches the MMSE performance based on the results from Fig. \ref{fig: 2}.% The number of BS antennas is 128, while the total number of users antennas can vary such that $N_t \leq N_r$.

%The proposed algorithm is compared to the MMSE linear detector at various massive MIMO configuration such as $16\times 128, 32 \times 128, 64\times 128, 128 \times 128$, that is the number of BS antennas is 128, while the number of single antenna users can vary such that $N_t \leq N_r$. For each SNR value, we stop the algorithm at around the iteration number at which its performance matches the MMSE performance based on the results from Fig. \ref{fig: 2}. 

Fig. \ref{fig: 1} shows that the BER of the proposed algorithm is upper bounded by that of the MMSE with the exact matrix inversion. For example, at higher ratio between $N_r$ and $N_t$, such as $16 \times 128$ or $32 \times 128$, the performance of the proposed algorithm and MMSE are the same. As the ratio becomes closer to 1, the BER of the proposed algorithm becomes slightly lower than that of the MMSE technique.% This advantage in the BER performance comes with fewer computations needed by AltMin, as shown in Table \ref{compComplexity}. In all the plotted performance in this figure, 

%shows the BER versus SNR performance results. It can be observed that at higher ration between $N_r$ and $N_t$, such as $16 \times 128$ and $32 \times 128$, the performance of AltMin and MMSE are the same. As the ratio becomes closer to 1 the performance of AltMin starts to exceed MMSE performance. For instance, at $10^{-2}$ BER, $128 \times 128$ AltMin with $40$ iterations has 1.2 dB superior than MMSE. This advantage in the BER performance comes with fewer computations needed by AltMin, as shown in Table \ref{compComplexity}. In all the plotted performance in this figure, 

 %The proposed algorithm is examined to see its performance when the  number of  antennas increases. This important aspect in large MIMO detectors referred to as adherence to a large system behavior. It means that the performance of a MIMO detector increases as the number of antennas increases \cite{elg015quad}. Fig. \ref{fig: 2} shows that the BER performance of the ...... improves as $N_t \times N_r$ increases (e.g. , $ 8\times 8$, $ 16\times 16$, $ 32\times 32$, and $ 64\times 64$). For instance at BER= $10^{-5}$, the performance of $ 64\times 64$ is just about 1 dB away from SISO AWGN.
\if0 
\begin{figure}[htp]
%%%%\begin{minipage}[b]{1.0\linewidth}
\centering
\includegraphics[width=8cm, height=6cm]{new_results/BER_vs_snr_MIMO_ntx128_alternate_algo.eps}
\caption{\small BER performance comparison with different massive MIMO configurations }
\label{fig: 1}
\end{figure}
\fi%

\vspace{-.1in}
\subsection{{Turbo Coded BER Performance}} The turbo coded BER performance of the proposed algorithm compared to the MMSE technique is shown in Fig. \ref{fig: 7.0} using coded QPSK modulation. In this simulation, all the above massive MIMO configurations are examined with rate-1/2 turbo encoder and decoder of 10 iterations. %$ \pm1 $ output valued vector from all detectors is fed as an input to the BCJR-based turbo decoder. 
In Fig. \ref{fig: 7.0},  AltMin based algorithm performs similar to the MMSE detector for $16\times 128$, and slightly better than MMSE for $32\times 128$. As the number of uplink antennas increases, the coded AltMin based algorithm outperforms coded MMSE, for example the improvement for the case of $128\times 128$ at $10^{-3}$ coded BER is about 1 dB compared to only 0.2 dB improvement in the case of $64\times 128$. 

%\indent Fig. \ref{fig:5} shows that the 2-LASSO-ADMM algorithm performs well even at higher QAM modulations, such as 16QAM, especially at high SNR regime. It outperforms QP detector at SNR greater than 17 dB, and RTS at SNR greater than 23 dB. Although RTS performs slightly better than our algorithm at low SNR with 16QAM, it was shown at various references that RTS tends to have a degraded BER performance as SNR increases and also as the number of antennas increases, especially at higher QAM modulations \cite{elghariani2016low}.

\vspace{-.1cm}
\subsection{Computational Complexity Analysis}
In this subsection, we compare the computational complexity of the proposed algorithm with the MMSE technique in terms of the number of multiplication operations, as depicted in Table~\ref{compComplexity}.  The comparison is based on the same SNR of $12$ dB and the same BER performance. More specifically, the number of iterations of AltMin is taken based on the results of Fig. \ref{fig: 2}, at which the BER of the two techniques coincides. 

%One thing to mention is that we found that scaling the $\lambda$ update equation, equation~\eqref{eq: 11} by $N_t$ minimize the number of iterations significantly. Therefore, we consider this scaling factor in updating $\lambda$.
%Moreover, to reduce the complexity of every iteration further, in the update of $\textbf{x}_i, \forall i$, we only consider solving equation~\eqref{eq:x1Update}. i.e, we don't consider the boundary solutions since we are already approximating the solution of the original non-convex problem and anyway we round the final values to $l$ and $-l$ if they are close to the boundary. In fact we did not notice any performance drop compared to considering the boundary solutions, but we run less number of computations. 
%First, we note that the AltMin needs to run for 8, 8, 14, and 14 iterations respectively for $16\times 128$, $32\times 128$, $64\times 128$, and $128\times 128$ in order to achieve the exact performance that is achieved by the MMSE. Table~\ref{compComplexity} shows the corresponding complexity of each of the algorithms in the total number of multiplication operations divided by $10^6$ at every MIMO configuration settings. 

From Table~\ref{compComplexity}, it can be observed that at small $N_t$, such as 16, MMSE outperforms the proposed algorithm by approximately a factor of 3. While for a large $N_t$, such as 128, the proposed algorithm shows superior computational reduction by a factor of $\frac{1}{6}$ as compared to MMSE. Note, although our algorithm requires more computations than MMSE for small $N_t$, it does not exhibit any matrix inversion or matrix multiplications, which is more advantageous in terms of hardware implementations \cite{zhang2018low}. 
%\cite{fang2016low}

%However, as the number of antennas increases, the computations needed by MMSE grows exponentially as compared to the linear growth of AltMin. Therefore, when $N_t$ reaches $64$ and $128$, AltMin archives a computation complexity that is approximately $2/3$-rd and $1/6$-th of the MMSE complexity respectively. Hence, we conclude that for large $N_t$, AltMin can achieve the performance of MMSE with significantly less computation complexity.
\small
\setlength\extrarowheight{3pt}

\begin{table}[h!]
%	  \vspace{-.1in}
  \centering
  \caption{Complexity comparison in terms of \# of real multiplications operations $\times 10^{6}$, for $N_r=128$}
  \begin{tabularx}{0.48\textwidth}{|X|X|X|X|X|} \hline
     & $N_t$=16 & $N_t$=32 & $N_t$=64 & $N_t$=128 \\ \hline
  MMSE  & 0.057 & 0.311 & 2.195 & 16.97 \\ \hline
  AltMin  & 0.204 & 0.409 & 1.409 & 2.818 \\ \hline
  $\frac{\text{AltMin}}{\text{MMSE}}$  & 3.57 & 1.31 & 0.64 & 0.166 \\ \hline
  \end{tabularx}
  \label{compComplexity}
  \vspace{-.1in}
\end{table}
\normalsize

% % % % % % % % % % % % % % % % % % % % % % % % % % % % % % % % % % % % % % % % %
%

\if0
\begin{figure}[h]
%%%%\begin{minipage}[b]{1.0\linewidth}
\centering
\includegraphics[width=6cm, height=4cm]{new_results/Coded_BER_MIMO_ntx128_alternate_algo.eps}
\caption{\small 1/2 Turbo coded BER performance of QPSK for different massive MIMO configurations}
\label{fig: 7.0}
\end{figure}

\begin{figure*}[!h]
	\centering
	\begin{minipage}[t]{4cm}
		\centering
		\includegraphics[scale=0.3]{new_results/qpsk_effect_iters_BER_ntx128_snr12.eps}
		\caption{\small  BER performance versus AltMin Iterations for different massive MIMO configurations at SNR=12 dB}
	\end{minipage}
	\hspace{3cm}
	\begin{minipage}[t]{4cm}
		\centering
		\includegraphics[scale=0.3]{new_results/BER_vs_snr_MIMO_ntx128_alternate_algo.eps}
		\caption{\small BER performance comparison with different massive MIMO configurations }
		\hspace{3cm}
		\end{minipage}
	\begin{minipage}[t]{4cm}
		\centering
		\includegraphics[scale=0.3]{new_results/Coded_BER_MIMO_ntx128_alternate_algo.eps}
		\caption{\small 1/2 Turbo coded BER performance of QPSK for different massive MIMO configurations}
	\end{minipage}
\vspace{-.2in}
\end{figure*}

\begin{figure*}
\begin{minipage}[b]{.25\linewidth}
\centering
\includegraphics[width=3in]{new_results/qpsk_effect_iters_BER_ntx128_snr12.eps}
\caption{\small  BER performance versus AltMin Iterations for different massive MIMO configurations at SNR=12 dB}
\label{fig: 2}
\end{minipage}%
\begin{minipage}[b]{.25\linewidth}
\centering
\includegraphics[width=3in]{new_results/BER_vs_snr_MIMO_ntx128_alternate_algo.eps}
\caption[\small BER performance comparison with different massive MIMO configurations ]
{\small BER performance comparison with different massive MIMO configurations 
}
\label{fig: 1}
\end{minipage}%
\begin{minipage}[b]{.25\linewidth}
\centering
\includegraphics[width=3in]{new_results/Coded_BER_MIMO_ntx128_alternate_algo.eps}
\caption[\small 1/2 Turbo coded BER performance of QPSK for different massive MIMO configurations]
{\small 1/2 Turbo coded BER performance of QPSK for different massive MIMO configurations \protect\rule[-\baselineskip]{0pt}
{2\baselineskip}}
\label{fig: 7.0}
\end{minipage}
\end{figure*}
\fi

\section{Conclusion}
In this letter we propose an iterative low complexity algorithm based on Alternating Minimization. This algorithm is a better alternative for the MMSE technique in Massive MIMO applications especially when the ratio between number of BS antennas and number of user equipment antennas (across all users) is small. We show that the proposed algorithm avoids complicated matrix inversion by solving the reformulated ML problem in an iterative manner, in which each iterations performs a simple computations based on a closed form expression. The results reveal that the algorithm can provide a lower computational complexity as compared to the MMSE technique with exact matrix inversion for both coded and uncoded cases.
%n this letter, we proposed an iterative low complexity algorithm based on Alternating Minimization approach. This algorithm is a good alternative for the MMSE technique in Massive MIMO applications where the number of BS antennas is not very large compared to the number of user equipment antennas.  The proposed Algorithm re-formulates the ML detection problem as a sum of convex functions based on decomposing the received vector into multiple vectors. Each vector represents the contribution of one of the transmitted symbols in the received vector. Then AltMin is used to solve the new formulated problem in an iterative manner with a closed form solution update in every iteration. The results reveal that the algorithm can provide lower computational complexity as compared to the exact MMSE with similar BER performance in both coded and uncoded performance.

\bibliographystyle{IEEEtran}

\bibliography{refMIMO}

\end{document}